\documentclass[twocolumn,prl,superscriptaddress]{revtex4-1}
\setcitestyle{super}
\usepackage{color,graphicx,subfigure,hyperref,amsthm,amsmath,amssymb}
\usepackage{multirow}
\usepackage{mathtools}
\usepackage{rotating}
\usepackage{url}
\hypersetup{colorlinks,breaklinks,
citecolor=[rgb]{0.0,0.5,0.5},
    urlcolor=[rgb]{0.0,0.5,0.5},
    linkcolor=[rgb]{0.0,0.5,0.5}}
\usepackage{tikz}
\usetikzlibrary{shapes,arrows}

\newtheorem{theorem}{Theorem}
\newtheorem{lemma}[theorem]{Lemma}

\definecolor{nblue}{rgb}{0.2,0.2,0.7}
\definecolor{ngreen}{rgb}{0.1,0.5,0.1}
\definecolor{nred}{rgb}{0.8,0.2,0.2}
\definecolor{nblack}{rgb}{0,0,0}

\renewcommand{\vec}[1]{\mathbf{#1}}

\DeclareMathOperator{\Tr}{Tr}
\def\ket #1{\vert #1\rangle}
\def\bra #1{\langle #1\vert}

\newcommand{\out}[1]{\ensuremath{\ket{#1}\!\bra{#1}}}
\newcommand{\inner}[2]{\ensuremath{\langle {#1}\ket{#2}}}

\newcommand{\Sigmamax}[1]{\langle \Sigma_{\textnormal{tot}}}

\newcommand{\Jam}{Jamio\l kowski }

\newcommand{\Zp}{\ensuremath{\mathbb{Z}_p}}
\newcommand{\nt}[1]{^{\otimes #1}}
\newcommand{\mc}[1]{\mathcal{#1}}
\newcommand{\mbb}[1]{\mathbb{#1}}
\newcommand{\mbf}[1]{\mathbf{#1}}

\newcommand{\hide}[1]{}

\newcommand{\ag}{\ensuremath{\alpha,\gamma}}

\newcommand{\sep}{\ensuremath{I_{\mathrm{sep}}}}
\newcommand{\ent}{\ensuremath{I_{\mathrm{ent}}}}
\newcommand{\SL}[1]{\mathrm{SL}(2,\mbb{Z}_{#1})}
\newcommand{\mat}[2]{\ensuremath{\left(\begin{array}{#1} #2
\end{array}\right)}}

\begin{document}
\title{Contextuality supplies the magic for quantum computation}

\author{Mark Howard}
\email{mark.howard@uwaterloo.ca}
\affiliation{%
Department of Mathematical Physics, National University of Ireland, Maynooth, Ireland}
\affiliation{%
Institute for Quantum Computing and Department of Applied Mathematics,
University of Waterloo, Waterloo, Ontario, Canada, N2L 3G1
}
\author{Joel Wallman}\email{joel.wallman@uwaterloo.ca}
\affiliation{%
Institute for Quantum Computing and Department of Applied Mathematics,
University of Waterloo, Waterloo, Ontario, Canada, N2L 3G1
}%
\author{Victor Veitch}\email{vveitch@uwaterloo.ca}
\affiliation{%
Institute for Quantum Computing and Department of Applied Mathematics,
University of Waterloo, Waterloo, Ontario, Canada, N2L 3G1
}
\affiliation{%
Department of Statistics,
University of Toronto, 100 St. George St. Toronto, Ontario, Canada M5S 3G3
}

\author{Joseph Emerson}\email{jemerson@math.uwaterloo.ca}
\affiliation{%
Institute for Quantum Computing and Department of Applied Mathematics,
University of Waterloo, Waterloo, Ontario, Canada, N2L 3G1
}%
\date{\today}

\begin{abstract}
Quantum computers promise dramatic advantages over their classical counterparts, but the answer to the most basic question ``What is the source of the power in quantum computing?" has remained elusive. Here we prove a remarkable equivalence between the onset of contextuality and the possibility of universal quantum computation via magic state distillation. This is a conceptually satisfying link because contextuality provides one of the fundamental characterizations of uniquely quantum phenomena and, moreover, magic state distillation is the leading model for experimentally realizing fault-tolerant quantum computation. Furthermore, this connection suggests a unifying paradigm for the resources of quantum information: the nonlocality of quantum theory is a particular kind of contextuality and nonlocality is already known to be a critical resource for achieving advantages with quantum communication. In addition to clarifying these fundamental issues, this work advances the resource framework for quantum computation, which has a number of practical applications, such as characterizing the efficiency and trade-offs between distinct theoretical and experimental schemes for achieving robust quantum computation and bounding the overhead cost for the classical simulation of quantum algorithms.
\end{abstract}
\maketitle

Quantum information enables dramatic new advantages for computation, such as Shor's factoring algorithm \cite{Shor:1994} and quantum simulation algorithms~\cite{Lloyd:1996}. This naturally raises the fundamental question: what unique resources of the quantum world enable the advantages of quantum information? There have been many attempts to answer this question, with proposals including the hypothetical ``quantum parallelism''~\cite{Deutsch:1985} some associate with quantum superposition, the necessity of large amounts of entanglement~\cite{Vidal:2003}, and much ado about quantum discord~\cite{Datta:2008}. Unfortunately none of these proposals have proven satisfactory~\cite{Steane:arxiv2000,vDN:2013,anti-discord-paper1,anti-discord-paper2}, and, in particular, none have helped resolve outstanding challenges confronting the field. For example, on the theoretical side, the most general classes of problems for which quantum algorithms might offer an exponential speed-up over classical algorithms are poorly understood. On the experimental side, there remain significant challenges to designing robust, large-scale quantum computers, and an important open problem is to determine the minimal physical requirements of a useful quantum computer~\cite{DQC1, MBQC}. A framework identifying relevant resources for quantum computation should help clarify these issues, for example, by identifying new simulation schemes for classes of quantum algorithms and by clarifying the trade-offs between the distinct physical requirements for achieving robust quantum computation. Here we establish that quantum contextuality, a generalization of nonlocality identified by Bell~\cite{Bell} and Kochen-Specker~\cite{KS} almost 50 years ago, is a critical resource for quantum speed-up within the leading model for fault-tolerant quantum computation, known as \emph{magic state distillation} (MSD)~\cite{BravyiKitaev:2005,Knill:2005,Campbell:2012}. 

Contextuality was first recognized as an intrinsic feature of quantum theory via the Bell-Kochen-Specker ``no-go'' theorem. This theorem implies the impossibility of explaining the statistical predictions of quantum theory in a natural way. In particular, the actual outcome observed under a quantum measurement cannot be understood as simply 
revealing a pre-existing value of some underlying ``hidden variable"~\cite{Mermin:1990}.
A key observation is that the non-locality of quantum theory is a special case of contextuality.
Under the locality restrictions motivating quantum communication, nonlocality is a quantifiable cost for classical simulation  complexity~\cite{Nonlocality} and a fundamental resource for practical applications such as device-independent quantum key distribution~\cite{DIQKD:2007,RUV:2013,VV:arxiv2012}. 
Locality restrictions can be made relevant to measurement-based quantum computation~\cite{MBQC}, for which nonlocality quantifies the resources required to evaluate nonlinear functions~\cite{Raussendorf:2013,HWB:2011}. However, locality restrictions are not relevant in the standard quantum circuit model for quantum computation,  and, in this context, a large amount of entanglement has been shown to be neither necessary nor sufficient for an exponential computational speed-up~\cite{vDN:2013}.

  Here we consider the framework of fault-tolerant (FT) stabilizer quantum computation (QC)~\cite{FTSTABQC} which provides the most promising route to achieving robust universal quantum computation thanks to the discovery of high-threshold codes in $2D$ geometries \cite{RHG:2006,Dennis:2002,Anwar:arxiv2014} (or see, e.g., a review article \cite{Fowler:2012}). In this framework, only a subset of quantum operations---namely, stabilizer operations---can be achieved via a fault-tolerant encoding. These operations define a closed subtheory of quantum theory, the stabilizer subtheory, which is not universal and in fact admits an efficient classical simulation~\cite{Aaronson:2004}. The stabilizer subtheory can be promoted to universal QC through a process known as magic state distillation~\cite{BravyiKitaev:2005,Knill:2005,Campbell:2012} which relies on a large number of ancillary resource states. 
Here we show that quantum contextuality plays a critical role in characterizing the suitability of quantum states for magic state distillation. Our approach builds on the recent work of Cabello, Severini and Winter (CSW)~\cite{CSW:arxiv2010,CSW:2014} that has established a remarkable connection between contextuality and graph-theory. 
We use the CSW framework to identify noncontextuality inequalities such that the onset of state-dependent contextuality, using stabilizer measurements, coincides exactly with the possibility of universal quantum computing via magic state distillation.
The scope of our results differs depending on whether we consider a model of computation using systems of even prime dimension (i.e.~qubits) or odd prime dimension (i.e.~qudits). Whereas in both cases we can prove that violating a non-contextuality inequality is necessary for quantum-computational speed-up via MSD, in the qudit case we are able to prove that a state violates a noncontextuality inequality if and only if it lies outside of the known boundary for MSD.

\textbf{Graph-based contextuality.}---
Interpreting measurements on a quantum state as merely revealing a pre-existing property of the system leads to disagreement with the predictions of quantum theory. In quantum mechanics, a projective measurement can be decomposed as a set of binary tests. Contradictions with models using pre-existing value assignments can arise when these tests appear in multiple measurement scenarios -- i.e., in multiple \emph{contexts}. In other words, we cannot always assign a definite value to tests appearing in multiple contexts and consequently quantum mechanics cannot be described by a noncontextual hidden variable (NCHV) theory. The earliest demonstrations of quantum contextuality used sets of tests such that no NCHV model could reproduce the quantum predictions, regardless of what quantum state was actually measured. Recently, a more general framework has been derived in which a given set of tests can be considered to have noncontextual value assignments only if the measured state satisfies a noncontextuality inequality \cite{CSW:arxiv2010}. We briefly review this framework below.

Consider a set of $n$ binary tests, which can be represented in quantum mechanics by a set of $n$ rank-1 projectors $\{\Pi_1,\ldots,\Pi_n\}$. Two such tests are compatible, and so can be simultaneously performed on a quantum system, if and only if the projectors are orthogonal. We define the witness operator $\Sigma$ for a set of tests to be
\begin{align}
\Sigma=\sum_{i=1}^{n} \Pi_i , \label{eqn:sigmadef}
\end{align}
and the associated exclusivity graph $\Gamma$ to be a graph wherein each vertex corresponds to a projector and two vertices are adjacent (connected) if the corresponding projectors are compatible. Only one outcome can occur when a measurement of a set of  orthogonal projectors is performed, so we require that a value of 1 will be assigned to at most one  projector in each measurement. Since two vertices of $\Gamma$ are adjacent if and only if the corresponding projectors are compatible, the maximum value of $\Sigma$ in an NCHV model, $\langle\Sigma \rangle^{\textsc{NCHV}}_{\max}$, is the independence number $\alpha(\Gamma)$, i.e., the size of the largest set of vertices of $\Gamma$ such that no two elements of the set are adjacent.

The maximum quantum mechanical value of $\Sigma$ can be obtained by varying over projectors satisfying the appropriate compatibility relations and over quantum states. This quantity is bound above by the Lovasz $\vartheta$ number of the exclusivity graph i.e.,
\begin{align}
\langle\Sigma \rangle^{\textsc{QM}}_{\max}\leq \vartheta(\Gamma),
\end{align}
where $\vartheta$ can be calculated as the solution to a semi-definite program. Graphs for which $\protect{\alpha(\Gamma) < \vartheta(\Gamma)}$ indicate that appropriately chosen projectors $\{\Pi_i\}$ and states $\rho$ may reveal quantum contextuality by violating the noncontextuality inequality
\begin{align}\label{eqn:witness}
\Tr (\Sigma \rho) \leq \alpha(\Gamma)	\,.
\end{align}
For generalized probabilistic theories (GPT), an important class of ``post-quantum'' theories, the maximum value of $\Sigma$ is given by the fractional packing number of the exclusivity graph $\alpha^*(\Gamma)$ i.e.,
\begin{align}
\langle\Sigma \rangle^{\textsc{GPT}}_{\max}=\alpha^*(\Gamma).
\end{align}
Note that if $\alpha(\Gamma) <\langle\Sigma \rangle^{\textsc{QM}}_{\max} =\alpha^*(\Gamma)$, then the optimal choice of quantum state and projectors is maximally contextual, in that no greater violation of the noncontextuality inequality can be obtained in any GPT.

\textbf{The stabilizer formalism.}---
Quantum information theory relies heavily on a family of finite groups usually called the (generalized) Pauli groups. The most promising and well understood quantum error correcting codes---stabilizer codes---are built using the elements of these groups, i.e., Pauli operators. Two-level quantum systems---qubits---are the most commonly used building blocks for quantum computing, but a circuit using $d$-level systems---qudits---has the same computational power. While qudits with larger values of $d$ may pose new experimental challenges, these may be offset by a lower overhead for fault-tolerant computation \cite{Campbell:2012}. In this subsection we outline the mathematical structure associated with the generalized Pauli group and the geometrical characterization of probabilistic mixtures of stabilizer states.

The stabilizer formalism for $p$-dimensonal systems ($p$ a prime number) is defined using the generalized $X$ and $Z$ operators
\begin{align}
X\ket{j} =\ket{j+1} \quad Z\ket{j}=\omega^j\ket{j}\,,
\end{align}
where $\omega = \exp(\tfrac{2\pi i}{p})$. The set of Weyl-Heisenberg displacement operators is defined as
\begin{align}
\mbf{D}_p = \{D_{x,z}=\omega^{2^{-1}xz} X^x Z^z:x,z\in\mbb{Z}_p\}	\,,\label{eqn:DispOps}
\end{align}
where $2^{-1}$ is the multiplicative inverse of $2$ in the finite field $\mbb{Z}_p=\{0,1,\ldots,p-1\}$. For $p=2$, one can replace $\omega^{-2^{-1}} $ with $i$ in Eq.~\eqref{eqn:DispOps} to recover the familiar qubit Pauli operators. The Clifford group $\mbf{C}_{p,n}$ is defined to be the normalizer of the group $\langle\mbf{D}_p\nt{n}\rangle$ (i.e., the group generated by the set of displacement operators), that is,
\begin{align}
\mbf{C}_{p,n} = \{U\in\mc{U}(d^n): U \langle\mbf{D}_p\nt{n}\rangle U^\dag =\langle\mbf{D}_p\nt{n}\rangle \}	\,,
\end{align}
and the set of stabilizer states is the image of the computational basis under the Clifford group $\mbf{C}_{p,n}$.

The stabilizer polytope is the convex hull of the set of stabilizer states. For a single system, the stabilizer polytope \cite{Cormick:2006} is defined by the following set of simultaneous inequalities
\begin{align}\label{eqn:Pstab}
\mathcal{P}_{\textsc{STAB}}=\left\{\rho\, :\Tr(\rho A^{\vec{q}})\geq 0,\, \vec{q} \in \mathbb{Z}_p^{p+1} \right\}\,
\end{align}
where $A^{\vec{q}}= -\mathbb{I}_p + \sum_{j=1}^{p+1} \Pi_j^{q_j}$ and
 $\Pi_j^{q_j}$ is the projector onto the eigenvector with eigenvalue $\omega^{q_j}$ of the $j$th operator in the list
$\left\{D_{0,1},D_{1,0},D_{1,1},\ldots,D_{1,p-1}\right\}$ [the eigenbases of these operators form a complete set of mutually unbiased bases (MUBs)].

\textbf{Magic state distillation.}---The stabilizer formalism of the previous subsection was developed in the search for quantum error-correcting codes, i.e., codes allowing the robust, fault-tolerant storage and manipulation of quantum information stored across many subsystems \cite{FTSTABQC,Knill:2005}. Surface codes \cite{Kitaev:2003,Dennis:2002,RHG:2006,Fowler:2012,Anwar:arxiv2014}, in particular, admit a comparatively high fault-tolerance threshold within an experimentally realistic planar physical layout. Codes such as these have a  finite non-universal set of transversal (i.e., manifestly fault-tolerant) operations that must be supplemented with an additional resource -- a supply of so-called magic states -- in order to attain universality. Magic state distillation (MSD) refers to the subroutine, described below, wherein almost pure resource states are constructed using large numbers of impure resource states~\cite{BravyiKitaev:2005,Knill:2005,Campbell:2012}.

An MSD protocol consists of the following steps: (i) Prepare $n$ copies of a suitable (see later) input state, i.e., $\rho_{\text{in}}^{\otimes n}$ (ii) Perform a Clifford operation on $\rho_{\text{in}}^{\otimes n}$ (iii) Perform a stabilizer measurement on all but the first $m$ registers, postselecting on a desired outcome. With appropriate choices of stabilizer operations, the resulting output state in the first $m$ registers, $\rho_{\text{out}}^{\otimes m}$, is purified in the direction of a magic state $\ket{\nu}$, so that $\langle{\nu}|\rho_{\text{out}}\ket{\nu} > \langle{\nu}|\rho_{\text{in}}\ket{\nu}$. This process can be reiterated until $\rho_{\text{out}}$ is sufficiently pure, at which point the resource $\rho_{\text{out}}$ is used up to approximate a non-Clifford operation (via ``state injection''), e.g., the $\pi/8$ gate or its qudit generalizations \cite{Campbell:2012,HV:2012}. Supplementing stabilizer operations with the ability to perform such gates enables fault-tolerant and universal QC.

\begin{figure}[ht]
\centering
\includegraphics[scale=0.5]{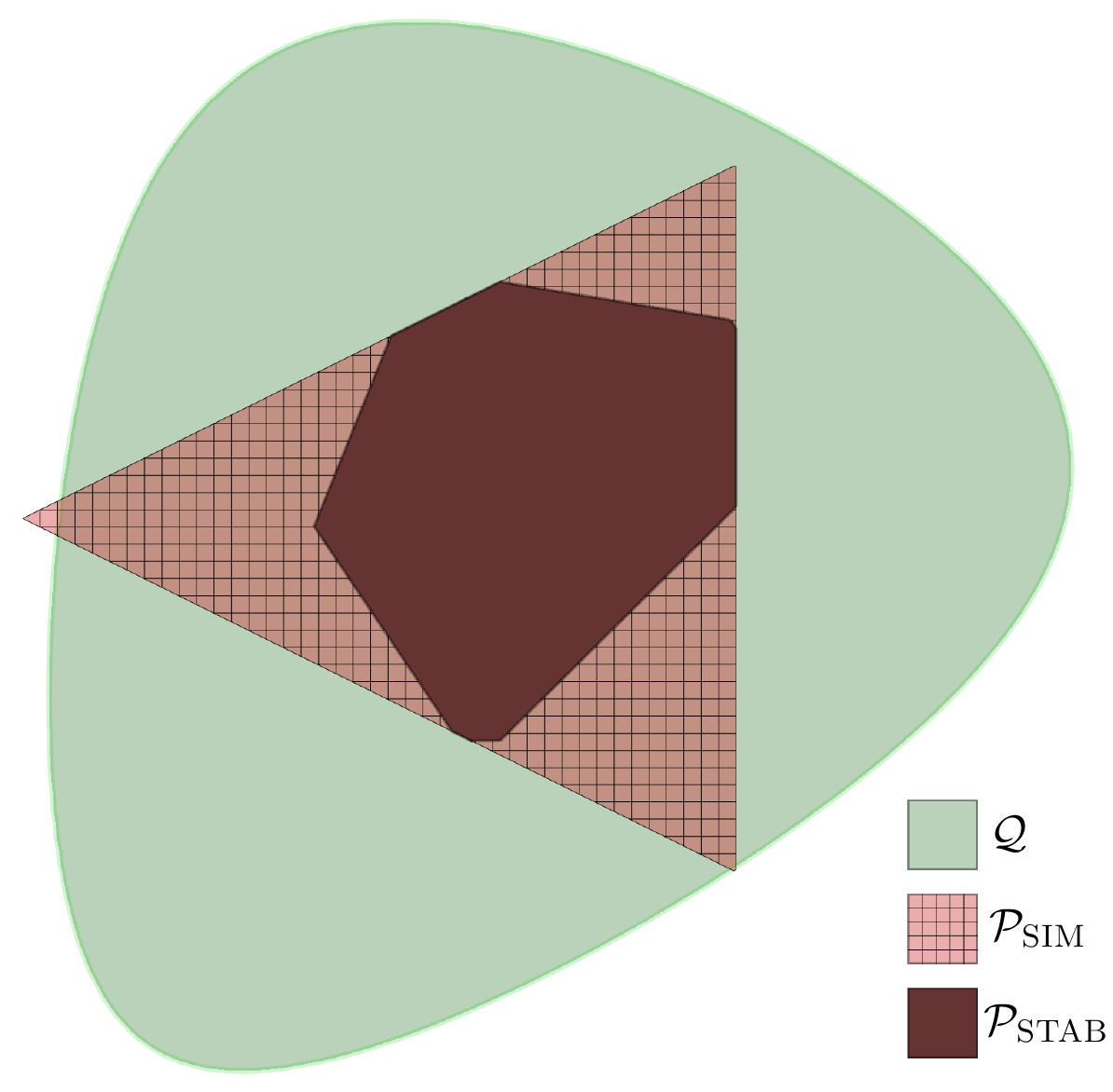}
\caption{\label{fig:slice}\textbf{A $2$-dimensional slice through qutrit state space.} Three distinct regions in the space of Hermitean operators: Region $\mathcal{Q}$, shaded in pale green, describing quantum state space (density operators), region $\mathcal{P}_{\textsc{SIM}}$, with hatched shading, corresponding to ancillas known to be efficiently simulable (and hence useless for quantum computation via Magic State Distillation) and the dark red region $\mathcal{P}_{\textsc{STAB}}$ describing mixtures of stabilizer states; the strict inclusion $\mathcal{P}_{\textsc{STAB}} \subset \mathcal{Q} \bigcap \mathcal{P}_{\textsc{SIM}}$ identifies a large class of \emph{bound magic states} \cite{Veitch:2012}.
}
\end{figure}

For which states $\rho_{\text{in}}$ does there exist an MSD routine purifying $\rho_{\text{out}}$ towards a non-stabilizer state? A large subset of quantum states have been ruled out by virtue of the fact that efficient classical simulation schemes are known for noiseless stabilizer circuits supplemented by access to an arbitrary number of states from the polytope $\rho_{\text{in}}\in \mathcal{P}_{\textsc{SIM}}$ ~\cite{Aaronson:2004,Veitch:2012,Mari:2012}. This polytope $\mathcal{P}_{\textsc{SIM}}$ of the known simulable states is prescribed by \cite{Cormick:2006,WvDMH:2010}
\begin{align}\label{eqn:Psim}
\mathcal{P}_{\textsc{SIM}}=\begin{cases} \left\{\rho\, :\Tr(\rho A^{\vec{r}})\geq 0,\,\vec{r} \in \mathbb{Z}^3_2 \right\}\, &p=2,\\
\left\{\rho\, :\Tr(\rho A^{x\vec{a}+z\vec{b}})\geq 0,\, x,z \in \Zp \right\}\, &p>2 \end{cases}
\end{align}
where $\vec{a}=[1,0,1,\ldots,p-1]$ and $\vec{b}=-[0,1,1,\ldots,1]$ \cite{Appleby:2008}. Note that $\mathcal{P}_{\textsc{SIM}}=\mathcal{P}_{\textsc{STAB}}$ for qubits (giving an octahedron inscribed within the Bloch sphere) whereas $\mathcal{P}_{\textsc{SIM}}\supset \mathcal{P}_{\textsc{STAB}}$ is a proper superset for all other primes. Subsequently we refer to the set of facets enclosing $\mathcal{P}_{\textsc{SIM}}$ as
\begin{align}
\mc{A}_{\textsc{SIM}}=\{A^{\vec{r}} | p=2:\vec{r} \in \mathbb{Z}^3_2, p\neq2:\vec{r}= x\vec{a}+z\vec{b}\}	\,.
\end{align}
In Fig.~1 we plot the geometric relationship between arbitrary quantum states, and sets of states contained within $\mathcal{P}_{\textsc{SIM}}$ and $\mathcal{P}_{\textsc{STAB}}$ for the case of qutrits ($p=3$).

By prior results \cite{Wallman:2012, Veitch:2012}, we know that the set of states $\mathcal{P}_{\textsc{SIM}}$ coincides exactly with the set of states that are nonnegatively represented within a distinguished quasiprobability representation---a discrete Wigner function (DWF) \cite{Wootters:DWF1,Wootters:DWF2,Gross:2006}. Are the states in the set $\mathcal{P}_{\textsc{SIM}}$, the set excluded from MSD by the known efficient simulation schemes, the complete set of non-distillable states? We now address this fundamental question by demonstrating a remarkable relationship between non-distillability, non-negativity and non-contextuality.  

\begin{figure}[b]
\centering
\includegraphics[scale=0.95]{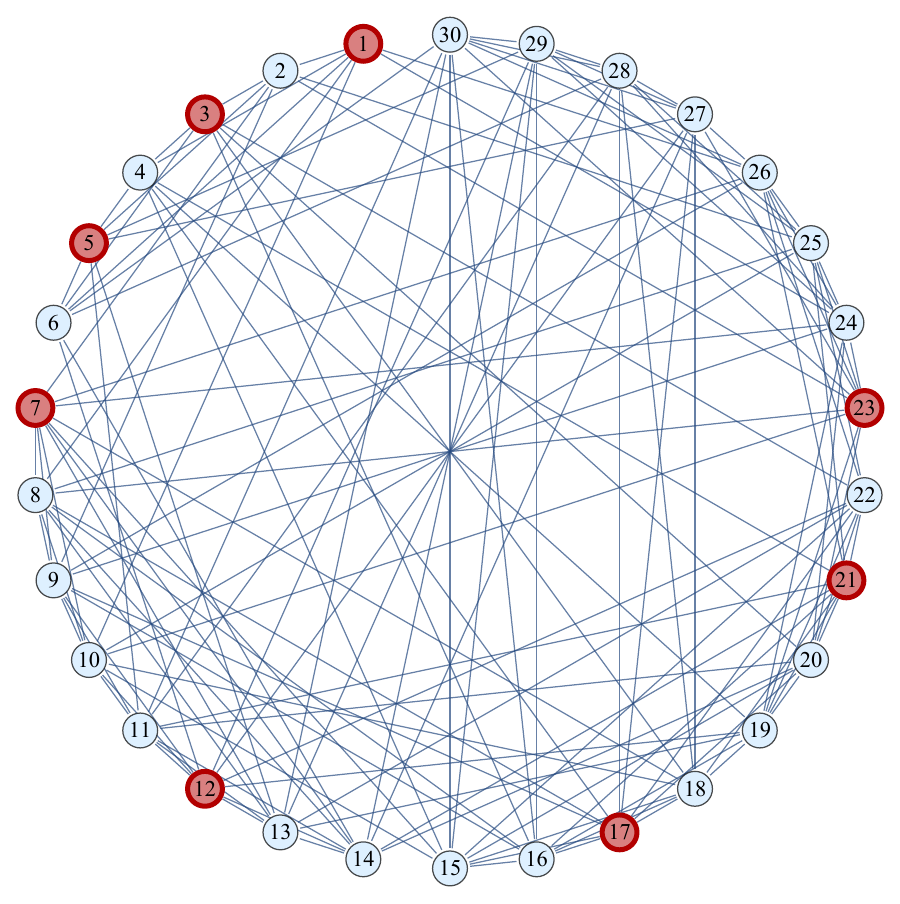}
\caption{\label{fig:graph} \textbf{ Our construction applied to $2$ qubits.} Each of the $30$ vertices in this graph $\Gamma$ corresponds to a $2$-qubit stabilizer state; connected vertices correspond to orthogonal states. A maximum independent set (representing mutually non-orthogonal states) of size $\alpha(\Gamma)=8$ is highlighted in red. As described in Thm.~\ref{thm:maintheorem}, this value of $\alpha$ identifies all states $\rho \notin \mathcal{P}_{\textsc{SIM}}$ as exhibiting contextuality with respect to the stabilizer measurements in our construction.}
\end{figure}

\textbf{Contextuality as a computational resource}---We will prove that all states $\rho \notin \mathcal{P}_{\textsc{SIM}}$ exhibit state-dependent contextuality with respect to stabilizer measurements.  Our definition of stabilizer measurement is quite inclusive; we allow all projective measurements wherein elements are rank-$1$ projectors onto stabilizer states. Rearranging the definition of $A^{\vec{r}}$ given in Eq.~\eqref{eqn:Pstab} gives
\begin{align}\label{eqn:sum_proj}
\sum_{j=1}^{p+1}\sum_{\stackrel{s_j\in\mbb{Z}_p}{s_j\neq r_j}} \Pi_j^{s_j} = p\mbb{I}_p - A^{\vec{r}}	\,,
\end{align}
that is, the set of projectors $\{\Pi_j^{s_j\neq r_j}\}$ is a set of projectors whose sum, $\Sigma^{\vec{r}}$, is such that
\begin{align}
\Tr (\Sigma^{\vec{r}}\rho) > p \Leftrightarrow \Tr(A^{\vec{r}}\rho)<0	.
\end{align}
The left hand side of this equivalence is a witness for contextuality if and only if the independence number of the associated graph $\Gamma^{\vec{r}}$ satisfies $\alpha(\Gamma^{\vec{r}})=p$ as in Eq.~\eqref{eqn:witness}. In fact, this simple construction fails to identify \emph{any} quantum states as contextual because $\Tr (\Sigma\rho) \leq \alpha(\Gamma^{\vec{r}})$ for all $\rho$. This is not surprising given that every single-qudit stabilizer projector is part of exactly one context, namely the basis (one of the complete set of MUBs) in which it is contained.

Stabilizer projectors appear in multiple contexts only when two or more subsystems are involved. Consequently, we introduce two-qudit stabilizer projectors such that the structure of $A^{\vec{r}}$ as in Eq.~\eqref{eqn:sum_proj} is reflected on the first qudit only. We can limit consideration to two-qudit projectors since this approach characterizes as contextual all single-qudit states that do not have an NCHV model via the discrete Wigner function, i.e., we find two-qudit projectors are sufficient to achieve the best possible result.

Our construction uses a different set of projectors for each facet $A^{\vec{r}}$. For a fixed facet $A^{\vec{r}}$, we define a set of separable projectors
\begin{align}
\{\Pi\}^{\vec{r}}_\textrm{sep}=\{\Pi_j^{s_j\neq r_j}\otimes\out{k}:\, 1\leq j\leq p+1,s_j,k\in\mbb{Z}_p\}
\end{align}
that is, we take the $p(p^2-1)$ separable projectors consisting of all tensor products of projectors in Eq.~\eqref{eqn:sum_proj} for the first qudit and computational basis states for the second qudit. We also define the set $\{\Pi\}_\textrm{ent}$ to be the set of all two-qudit entangled projectors.

The sum of the combined set of separable and entangled projectors $\{\Pi\}^{\vec{r}}=\{\Pi\}^{\vec{r}}_\textrm{sep}\cup\{\Pi\}_\textrm{ent}$ is
\begin{align}
\Sigma^{\vec{r}}= (p^3\mbb{I}_p - A^{\vec{r}})\otimes\mbb{I}_{p} \label{eqn:Sigmatot}
\end{align}
so that for any state $\sigma \in \mathcal{H}_p$ of the second system (even the maximally mixed state) we have
\begin{align}
\Tr\left[\Sigma^{\vec{r}}\left(\rho \otimes \sigma\right)\right]\leq p^3 \iff \Tr\left[A^{\vec{r}} \rho \right] \geq 0. \label{eqn:bijection}
\end{align}
Forming the exclusivity graph $\Gamma^{\vec{r}}$ of $\{\Pi\}^{\vec{r}}$ and applying the results of CSW identifies the left hand side of Eq.~\eqref{eqn:bijection} as a witness for the contextuality of $\rho$.
The following theorem shows that the inequality on the left-hand-side of Eq.~\eqref{eqn:bijection} is indeed a noncontextuality inequality.

\begin{theorem}\label{thm:maintheorem}
The independence number of the exclusivity graph associated with $\Sigma^{\vec{r}}$ is $\alpha(\Gamma^{\vec{r}})=
p^3$ for all $A^{\vec{r}}\in \mc{A}_{\textsc{SIM}}$ and all prime $p\geq2$.
Furthermore, for $p>2$, a state exhibits contextuality if and only if it violates one of our noncontextuality inequalities and maximally contextual states saturate the bound on contextuality associated with post-quantum generalized probabilistic theories i.e.,
\begin{align}
\langle \Sigma^{\vec{r}}\rangle^{2-\textsc{qudit}}_{\max} =\vartheta(\Gamma^{\vec{r}})=\alpha^*(\Gamma^{\vec{r}})=p^3+1\ \quad &(p>2).
\end{align}
\end{theorem}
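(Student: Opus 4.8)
\emph{Proof plan.} The statement breaks into three nearly independent pieces: (i)~$\alpha(\Gamma^{\vec r})=p^3$; (ii)~for odd $p$, $\vartheta(\Gamma^{\vec r})=\alpha^*(\Gamma^{\vec r})=p^3+1$ together with $\langle\Sigma^{\vec r}\rangle^{2-\textsc{qudit}}_{\max}=p^3+1$; and (iii)~``$\rho$ is contextual with respect to our measurements iff it violates one of our inequalities.'' Piece~(iii) will follow as a corollary: combining Eq.~\eqref{eqn:bijection} with~(i) shows that $\rho$ violates some inequality $\Tr[\Sigma^{\vec r}(\rho\otimes\sigma)]\le\alpha(\Gamma^{\vec r})=p^3$ precisely when $\Tr(A^{\vec r}\rho)<0$ for some $A^{\vec r}\in\mc A_\textsc{SIM}$, i.e.\ precisely when $\rho\notin\mathcal P_\textsc{SIM}$; since such a violation witnesses contextuality by CSW, while every $\rho\in\mathcal P_\textsc{SIM}$ has a noncontextual discrete-Wigner model for all stabilizer measurements by the cited results~\cite{Wallman:2012,Veitch:2012}, the two conditions coincide.

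I would first dispatch the bookkeeping. By Eq.~\eqref{eqn:sum_proj} the separable projectors sum to $(p\mbb I_p-A^{\vec r})\otimes\mbb I_p$, so by Eq.~\eqref{eqn:Sigmatot} the entangled projectors sum to $(p^3-p)\mbb I_{p^2}$, while $|\{\Pi\}^{\vec r}_\textrm{sep}|=p(p^2-1)$ and $|\{\Pi\}_\textrm{ent}|=p^2(p^3-p)$. The one structural input needed is that the two-qudit entangled stabilizer states partition into exactly $p^3-p$ orthonormal bases of size $p^2$: these are the common eigenbases of the maximal stabilizer subgroups of $\langle\mbf D_p\nt 2\rangle$ containing no nontrivial ``partially local'' element $P\otimes\mbb I$ or $\mbb I\otimes P$, equivalently the Lagrangian subspaces of $\mbb Z_p^4$ that are graphs of an anti-symplectic map between the two factors, of which there are $|\SL p|=p^3-p$; each entangled stabilizer state lies in exactly one such basis, and each such basis consists of $p^2$ maximally entangled states.

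\emph{Piece~(ii).} For odd $p$ the cited identification of $\mc A_\textsc{SIM}$ with the discrete-Wigner phase-point operators gives $(A^{\vec r})^2=\mbb I_p$ with eigenvalues $+1$ and $-1$ of multiplicities $\tfrac{p+1}2$ and $\tfrac{p-1}2$; hence $\Sigma^{\vec r}=(p^3\mbb I_p-A^{\vec r})\otimes\mbb I_p$ has maximal eigenvalue $p^3+1$, attained on $\ket\psi\otimes\ket\phi$ with $\ket\psi$ any $(-1)$-eigenvector of $A^{\vec r}$. Thus $\langle\Sigma^{\vec r}\rangle^{2-\textsc{qudit}}_{\max}=p^3+1$ and in particular $\vartheta(\Gamma^{\vec r})\ge p^3+1$. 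For the matching upper bound I would exhibit an explicit fractional clique cover of $\Gamma^{\vec r}$ of weight $p^3+1$: weight $1$ on each of the $p^3-p$ entangled orthonormal bases above, and weight $1$ on each of the $p+1$ cliques $C_j=\{\Pi_j^{s}\otimes\out k:s\neq r_j,\ k\in\mbb Z_p\}$, which are sets of pairwise-orthogonal projectors. Every entangled vertex is covered once by its basis and every separable vertex $\Pi_j^{s}\otimes\out k$ once by $C_j$, so $\alpha^*(\Gamma^{\vec r})\le p^3+1$. The standard sandwich $\langle\Sigma^{\vec r}\rangle^{\textsc{QM}}_{\max}\le\vartheta(\Gamma^{\vec r})\le\alpha^*(\Gamma^{\vec r})$ then forces all four quantities to equal $p^3+1$.

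\emph{Piece~(i), and the main obstacle.} For $\alpha(\Gamma^{\vec r})\ge p^3$ I would write down an explicit family of $p^3$ pairwise non-orthogonal stabilizer projectors --- one compatible ``slice'' $\{\Pi_j^{s_j}\otimes\out{k_0}\}_j$ of the separable family together with suitably chosen representatives of the entangled bases --- and verify non-orthogonality directly (the $p=2$ instance is the highlighted $8$-element set of Fig.~\ref{fig:graph}). For the upper bound, write an independent set as $S=S_\textrm{sep}\cup S_\textrm{ent}$: all members of $S_\textrm{sep}$ must share one second-register label $k_0$ (distinct labels give orthogonal projectors) and contain at most one projector from each of the $p+1$ mutually unbiased bases on the first register, so $|S_\textrm{sep}|\le p+1$; pigeonhole on the $p^3-p$ entangled bases gives $|S_\textrm{ent}|\le p^3-p$. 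This yields only $|S|\le p^3+1$, and the crux is to remove the last unit by showing $|S_\textrm{sep}|$ and $|S_\textrm{ent}|$ cannot both be maximal. I would argue this by analyzing, for a maximally entangled stabilizer state $\ket\Psi=\tfrac1{\sqrt p}\sum_k\ket{\psi_k}\otimes\ket k$, exactly which product states $\ket{\phi_j^{s}}\otimes\ket{k_0}$ it stays non-orthogonal to --- this reduces to whether $\ket{\psi_{k_0}}$ is orthogonal to $\ket{\phi_j^{s}}$, a mutually-unbiased incidence condition --- and showing that a ``rainbow'' choice hitting all $p^3-p$ entangled bases is incompatible with retaining all $p+1$ separable slices. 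Controlling this interaction between the separable and entangled parts of the independent set is where I expect the real work to lie; the remaining ingredients are bookkeeping or direct appeals to CSW and the Lovász sandwich inequalities.
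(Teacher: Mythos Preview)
Your decomposition into Pieces (i)--(iii) and your arguments for (ii) and (iii) coincide with the paper's: the paper bounds $\alpha^*$ above via the clique cover number $\bar\chi$ using exactly your cover by $p+1$ separable (partial) bases and $p^3-p$ entangled bases, and bounds $\langle\Sigma^{\vec r}\rangle^{\textsc{QM}}_{\max}$ below by exhibiting a state with $\Tr(A^{\vec r}\rho)=-1$, then squeezes. The ``iff'' for contextuality is likewise handled by appealing to the discrete Wigner function as an NCHV model on $\mathcal P_{\textsc{SIM}}$.

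The differences lie in Piece~(i). For the lower bound $\alpha\geq p^3$ the paper does \emph{not} construct an explicit family by hand; instead it observes that for any phase-space point $(\vec u,\vec v)$ with $\vec u\neq\vec r$, the projectors in $\{\Pi\}^{\vec r}$ with $W_\Pi(\vec u,\vec v)=1$ are mutually nonorthogonal (since $\Tr(\Pi\Pi')=p^{-1}\sum W_\Pi W_{\Pi'}$), and a linearity-of-trace count using $\Sigma^{\vec r}=(p^3\mathbb I-A^{\vec r})\otimes\mathbb I$ shows there are exactly $p^3$ of them. This is cleaner and more uniform than an ad hoc list. For the upper bound you have correctly located the crux, and your Schmidt-decomposition observation $\langle\phi_j^s|\otimes\langle k_0|\,\Psi\rangle=p^{-1/2}\langle\phi_j^s|\psi_{k_0}\rangle$ is the right entry point, but the paper's execution is more structured than your sketch suggests: it parametrizes every two-qudit entangled stabilizer state as $\ket{x,z,F}=(D_{x,z}U_F\otimes\mathbb I)\ket{\Phi}$ with $F\in\SL p$ (so entangled bases are labelled by $F$ and elements within a basis by the Pauli shift $(x,z)$), proves explicit algebraic criteria for ent--ent and ent--sep nonorthogonality in these coordinates, and then shows that a putative independent set of size $p^3+1$ forces the first-qudit eigenvalue index of some separable element to be the \emph{excluded} value $r_j$, so that element is not in $\{\Pi\}^{\vec r}$. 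This is slightly different from your phrasing ``$|S_{\rm sep}|$ and $|S_{\rm ent}|$ cannot both be maximal'': they can be simultaneously maximal among all two-qudit stabilizer projectors, but only at the cost of including a forbidden separable projector. Finally, note that both the DWF lower bound and the symplectic parametrization are specific to odd $p$; the paper handles $p=2$ by direct computer calculation, which your plan does not separately address.
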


Theorem \ref{thm:maintheorem} says that, relative to our construction, exactly the states $\rho \notin \mathcal{P}_{\textsc{SIM}}$ are those that exhibit contextuality. For qudits of odd prime dimension there does not exist any construction using stabilizer measurements that characterizes any $\rho \in \mathcal{P}_{\textsc{SIM}}$ as contextual, so that the conditions for contextuality and the possibility of quantum speed-up via magic state distillation coincide exactly.

\begin{proof}
For $p=2$, a software package~\cite{Ostergard:2002} can be used to obtain
\begin{align*}
\alpha(\Gamma^{\vec{r}}) = 8 < \vartheta(\Gamma^{\vec{r}}) < \alpha^*(\Gamma^{\vec{r}})\leq 9 \,.
\end{align*} 
The exclusivity graph $\Gamma^{\vec{r}}$ and an independent set of 8 vertices is depicted in Fig.~2. The maximal violation of our noncontextuality inequality is achieved by the state $\out{T}\otimes \sigma $, where $\ket{T}$ is the magic state introduced in \cite{BravyiKitaev:2005} and $\sigma$ is arbitrary.

For $p>2$, we will now show that $\vartheta(\Gamma^{\vec{r}})=\alpha^*(\Gamma^{\vec{r}}) = p^3+1$. To do this, we use the graph theoretical inequality
\begin{align}
\alpha(\Gamma)\leq \langle \Sigma \rangle^{\textsc{QM}}_{\max} \leq \vartheta(\Gamma) \leq \alpha^*(\Gamma) \leq \bar{\chi}(\Gamma), \label{eqn:inequalities}
\end{align}
where $\bar{\chi}(\Gamma) \in \mathbb{N}$ is the clique cover number, which is the minimum number of cliques needed to cover every vertex of $\Gamma$. The clique cover number cannot be greater than the number of distinct bases in $\{\Pi\}^{\vec{r}}$, which contains $p+1$ separable bases and $p^3 - p$ entangled bases. Therefore $\vartheta(\Gamma) \leq \alpha^*(\Gamma) \leq p^3 + 1$. Then, since there exist\cite{Gross:2006} quantum states $\rho$ such that $\Tr(A^{\vec{r}}\rho) = -1$, $\langle \Sigma^{\vec{r}} \rangle^{\textsc{QM}}_{\max} \geq p^3+1$ by Eq.~\eqref{eqn:bijection} and so $\vartheta(\Gamma^{\vec{r}}) = \alpha^*(\Gamma^{\vec{r}}) = p^3 +1$.

The statement that no noncontextuality inequality constructed from stabilizer measurements can be violated by any state $\rho\in\mathcal{P}_{\textsc{SIM}}$ for odd prime dimensions follows from the existence of a NCHV model, namely, the discrete Wigner function~\cite{Gross:2006,Veitch:2012,Wootters:DWF1,Wootters:DWF2}, for all stabilizer measurements and states $\rho\in\mathcal{P}_{\textsc{SIM}}$.

We defer the proof that $\alpha(\Gamma^{\vec{r}}) = p^3$ to the supplementary material.
\end{proof}

\textbf{Significance and outlook.}---
For qudits of odd prime dimension (hereafter referred to simply as qudits), a state is non-contextual under the available set of measurements---stabilizer measurements---if and only if it lies in the polytope $\mathcal{P}_{\textsc{SIM}}$ (the set of ancilla states known to be useless for any magic state distillation routine). The same construction applied to qubits also identifies all $\rho \not\in \mathcal{P}_{\textsc{SIM}}$ as contextual. These results establish that contextuality is a necessary resource for universal quantum computation via MSD.

%
%

For qudits, the set of states proven to be contextual by our construction have been previously conjectured to be sufficient~\cite{Veitch:2014} to promote stabilizer circuits to universality. Proving this conjecture would require proving that any state $\rho \notin\mathcal{P}_{\textsc{SIM}}$ can be distilled to a magic state. While substantial progress in this direction has been made~\cite{Campbell:2012}, it is still an open problem.

For qubits, the mere presence of contextuality cannot be sufficient to promote stabilizer circuits to universality since any state $\rho\in \mathcal{P}_{\textsc{SIM}}$ (which includes the maximally mixed state) can violate a noncontextuality inequality constructed from stabilizer measurements. For example, converting the Peres-Mermin magic square~\cite{Peres:1991,Howard:2013a} to a 24-ray (projector) proof of contextuality and applying the CSW formalism gives a noncontextuality inequality that is violated by all two-qubit states including states of the form $\rho \otimes \sigma=\mathbb{I}/4$.

The crucial difference between qubits and qudits is that state-independent contextuality (like that of the Peres-Mermin square) is never manifested within the qudit stabilizer formalism. Consequently, for qudits, any contextuality is necessarily state-dependent and our results show that this contextuality has an operational meaning as necessary and possibly sufficient for the ``magic'' that makes quantum computers tick. In the qubit case, it is a pressing open question whether a suitable operationally-motivated refinement~\cite{Spekkens:2005,Acin:arxiv2012} or quantification of contextuality can align more precisely with the potential to provide a quantum speed-up.

\bigskip

\noindent \textbf{Acknowledgements}
M.H.~was financially supported by the Irish Research Council (IRC) as part of the Empower Fellowship program, and all authors acknowledge financial support from CIFAR and the Government of Canada through NSERC.
%
%
%

\bigskip

\textbf{Supplementary Information: Contextuality supplies the magic for quantum computation}

Here we prove that, for odd-prime $p$, the independence number of $\Gamma^{\vec{r}}$ is $p^3$. Recall that the independence number of a graph $\Gamma$ is the size of the largest independent set of $\Gamma$, where an independent set is a set of vertices of which no two are connected. Since two vertices are connected if and only if the associated projectors commute, an independent set in $\Gamma^{\vec{r}}$ is equivalent to a set of mutually noncommuting projectors in $\{\Pi\}^{\vec{r}}$. Since the elements of $\{\Pi\}^{\vec{r}}$ are all rank 1, two elements are noncommuting if and only if they are nonorthogonal. 

We prove $\alpha(\Gamma^{\vec{r}}) = p^3$ by proving $\alpha(\Gamma^{\vec{r}}) \geq p^3$ and $\alpha(\Gamma^{\vec{r}}) < p^3 +1$. This completes the proof since $\alpha(\Gamma^{\vec{r}})$ is an integer. In Theorem~\ref{thm:lower} we show that $\alpha(\Gamma^{\vec{r}}) \geq p^3$ by showing that there exists a set of $p^3$ mutually nonorthogonal elements of $\{\Pi\}^{\vec{r}}$ for any $A^{\vec{r}}\in \mc{A}_{\rm sim}$. In Lemmas~\ref{lem:orth}--\ref{lem:entsep} we parametrize the set of stabilizer projectors using the symplectic representation of the Clifford group in order to transform a condition of mutual nonorthogonality of projectors into a set of algebraic constraints on parameters. In Theorem~\ref{thm:upper} we then show that $\alpha(\Gamma^{\vec{r}}) < p^3 + 1$ by showing that no subset of $p^3 + 1$ elements of $\{\Pi\}^{\vec{r}}$ can satisfy the constraints established in Lemmas~\ref{lem:orth}--\ref{lem:entsep}, that is, there cannot exist a subset of $p^3 +1$ mutually nonorthogonal elements of $\{\Pi\}^{\vec{r}}$.

\begin{theorem}\label{thm:lower}
For any $A^{\vec{r}}\in \mc{A}_{\rm sim}$, the independence number of the exclusivity graph $\Gamma^{\vec{r}}$ satisfies $\alpha(\Gamma^{\vec{r}})\geq p^3$.
\end{theorem}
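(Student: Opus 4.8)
The plan is to extract the bound from the very discrete Wigner function (DWF) that furnishes the noncontextual model invoked in the proof of Theorem~\ref{thm:maintheorem}. Fix odd prime $p$, work on two qudits with phase space $\mathbb{Z}_p^4$ and phase-point operators $\{A_\lambda\}_{\lambda\in\mathbb{Z}_p^4}$ (not to be confused with the single-qudit facet operators $A^{\vec r}$), and write $W_\rho(\lambda)=p^{-2}\Tr(A_\lambda\rho)$. I will use three standard facts about the odd-$p$ DWF~\cite{Gross:2006,Wootters:DWF2,Veitch:2012}: (i) $\Tr(A_\lambda A_{\lambda'})=p^2\delta_{\lambda\lambda'}$, whence $\Tr(\rho\rho')=p^2\sum_\lambda W_\rho(\lambda)W_{\rho'}(\lambda)$; (ii) every pure two-qudit stabilizer projector $\Pi$ satisfies $W_\Pi=p^{-2}\mathbf{1}_{S_\Pi}$ for some $S_\Pi\subseteq\mathbb{Z}_p^4$ with $|S_\Pi|=p^2$ (for product stabilizer states this follows by multiplicativity of the DWF from the single-qudit case $|S|=p$); and (iii) consequently $\Tr(\Pi\Pi')=p^{-2}|S_\Pi\cap S_{\Pi'}|$, so two stabilizer projectors are orthogonal if and only if $S_\Pi\cap S_{\Pi'}=\emptyset$. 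All of this uses oddness of $p$; the case $p=2$ is covered separately by the clique search.

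The argument then has three short steps. First, for each $\lambda$ set $I_\lambda=\{\Pi\in\{\Pi\}^{\vec r}:\lambda\in S_\Pi\}$; any two $\Pi,\Pi'\in I_\lambda$ share the point $\lambda$, so by (iii) $\Tr(\Pi\Pi')\geq p^{-2}>0$, i.e. they are nonorthogonal, and since adjacency in $\Gamma^{\vec r}$ is orthogonality, $I_\lambda$ is an independent set (equivalently, $\Pi\mapsto p^2 W_\Pi(\lambda)\in\{0,1\}$ is a noncontextual value assignment and $I_\lambda$ is the set it marks with $1$). Second, count $\{\Pi\}^{\vec r}$: the separable part has $p(p^2-1)$ elements by construction, and the entangled part is the set of all entangled two-qudit stabilizer projectors, numbering $p^2(p+1)(p^2+1)-(p(p+1))^2=p^3(p^2-1)$ (all two-qudit pure stabilizer states minus the product ones), so $|\{\Pi\}^{\vec r}|=(p^2-1)(p+p^3)=p^5-p$. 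Third, double-count the incidences $(\lambda,\Pi)$ with $\lambda\in S_\Pi$: $\sum_{\lambda\in\mathbb{Z}_p^4}|I_\lambda|=\sum_{\Pi\in\{\Pi\}^{\vec r}}|S_\Pi|=(p^5-p)p^2=p^7-p^3$, so the average of $|I_\lambda|$ over the $p^4$ phase points is $p^3-1/p>p^3-1$; since each $|I_\lambda|$ is an integer, some $\lambda^\star$ has $|I_{\lambda^\star}|\geq p^3$, and hence $\alpha(\Gamma^{\vec r})\geq p^3$.

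There is no genuinely hard step here — the proof is an averaging bound — but the parts that need care are inputs (ii)/(iii), i.e. the flat, nonnegative, Lagrangian-supported structure of stabilizer Wigner functions (precisely the odd-$p$ fact behind the NCHV model used for Theorem~\ref{thm:maintheorem}, and the reason $p=2$ must be handled apart), and the enumeration of entangled two-qudit stabilizer states. It is worth noting that the averaged lower bound equals exactly $p^3-1/p$, so it barely forces $\alpha\geq p^3$ and leaves no slack, consistent with the complementary bound $\alpha(\Gamma^{\vec r})<p^3+1$ established afterwards. One could instead try to write down an explicit optimal phase point $\lambda^\star$, or equivalently an explicit list of $p^3$ pairwise-nonorthogonal projectors (for instance, a maximum pairwise-nonorthogonal family of entangled projectors augmented by a few separable ones sharing a common second-register state), but the counting argument sidesteps that bookkeeping.
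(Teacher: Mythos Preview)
Your argument is correct and rests on the same discrete-Wigner-function idea the paper uses: for each phase-space point $\lambda$, the set $I_\lambda$ of projectors with $W_\Pi(\lambda)>0$ is independent in $\Gamma^{\vec r}$. The difference lies only in how you establish that some $I_\lambda$ has size $\ge p^3$. You enumerate $|\{\Pi\}^{\vec r}|=p^5-p$, invoke the flat-support fact $|S_\Pi|=p^2$, and run a double-counting/pigeonhole argument to force $\max_\lambda|I_\lambda|\ge p^3$. The paper instead evaluates $|I_{(\vec u,\vec v)}|$ directly: since $\sum_{\Pi\in\{\Pi\}^{\vec r}}W_\Pi(\vec u,\vec v)=\Tr\bigl[\Sigma^{\vec r}(A^{\vec u}\otimes A^{\vec v})\bigr]$ and $\Sigma^{\vec r}=(p^3\mathbb{I}_p-A^{\vec r})\otimes\mathbb{I}_p$ by Eq.~\eqref{eqn:Sigmatot}, one obtains $|I_{(\vec u,\vec v)}|=p^3-\delta_{\vec r,\vec u}$ for \emph{every} phase point, so any $\vec u\neq\vec r$ already exhibits an independent set of size exactly $p^3$. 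This is precisely the ``explicit optimal phase point'' route you mention at the end as an alternative; it is shorter, requires neither the enumeration of entangled stabilizer states nor the averaging step, and exposes the uniform saturation of the bound away from the slice $\vec u=\vec r$. Your averaging argument, by contrast, is agnostic to the specific operator identity for $\Sigma^{\vec r}$ and would apply to any projector family of the same cardinality with the same flat DWF support, which is a mild structural advantage.
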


\begin{proof}
To prove $\alpha(\Gamma^{\vec{r}})\geq p^3$, it is sufficient to show that there exists a set of $p^3$ mutually nonorthogonal elements of $\{\Pi\}^{\vec{r}}$. We accomplish this using the phase-space formalism for stabilizer projectors$^{34,40}$.

The phase-space formalism maps stabilizer projectors to value assignments over a phase space $\mbb{Z}_p^4$. For two qudits, the map is given by the discrete Wigner function
\begin{align}
W_{\Pi}(\vec{u},\vec{v}) = \Tr(\Pi A^{\vec{u}}\otimes A^{\vec{v}})
\end{align}
where $A^{\vec{u}},A^{\vec{v}}\in\mc{A}_{\rm sim}$. The discrete Wigner function takes on the values $\{0,1\}$ for stabilizer projectors$^{40}$. Note that the above map is noncontextual since it depends only upon $\Pi$ and not upon which other projector is measured. Since
\begin{align}
\Tr(\Pi \Pi') = p^{-1}\sum_{\vec{u},\vec{v}} W_{\Pi'}(\vec{u},\vec{v})W_{\Pi}(\vec{u},\vec{v})	\ \forall \Pi,\Pi'	\,,
\end{align}
the set of projectors that assign the value 1 to a point $(\vec{u},\vec{v})$ in phase space are mutually nonorthogonal.

By the linearity of the trace, the number of elements of $\{\Pi\}^{\vec{r}}$ that assign the value 1 to $(\vec{u},\vec{v})$ is
\begin{align}
\sum_{\Pi\in \{\Pi\}^{\vec{r}}} W_{\Pi}(\vec{u},\vec{v}) &= \sum_{\Pi\in \{\Pi\}^{\vec{r}}} \Tr(\Pi A^{\vec{u}}\otimes A^{\vec{v}}) \notag\\
&= \Tr\Bigl[\Bigl(\sum_{\Pi\in \{\Pi\}^{\vec{r}}} \Pi\Bigr) A^{\vec{u}}\otimes A^{\vec{v}}\Bigr] \notag\\
&= \Tr\Bigl[\Sigma^{\vec{r}} A^{\vec{u}}\otimes A^{\vec{v}}\Bigr]	\notag\\
&= \Tr\Bigl[\left[(p^3\mbb{I}_p - A^{\vec{r}})\otimes\mbb{I}_{p}\right] A^{\vec{u}}\otimes A^{\vec{v}}\Bigr]	\notag\\
&= p^3 - \delta(\vec{r} - \vec{u})\,,
\end{align}
where the last line follows from $\Tr A^{\vec{u}} = 1$ and $\Tr (A^{\vec{r}} A^{\vec{u}}) = \delta_{\vec{r},\vec{u}}$. Therefore for any $\vec{u}\neq \vec{r}$, exactly $p^3$ elements of $\Pi\in \{\Pi\}^{\vec{r}}$ assign the value 1 to $(\vec{u},\vec{v})$ for any $v$ and so are mutually nonorthogonal.
\end{proof}

To prove that for any $A^{\vec{r}}\in \mc{A}_{\rm sim}$, any set of $p^3 + 1$ mutually nonorthogonal stabilizer projectors necessarily contains elements outside of $\{\Pi\}^{\vec{r}}$, we will parametrize the set of stabilizer projectors in order to transform a condition of mutual nonorthogonality into a set of algebraic constraints on parameters.

The specific parametrization we use is the symplectic representation of the Clifford group$^{38}$. In this parametrization, Clifford elements are written as
\begin{align}
C = D_{x,z} U_F	\,,
\end{align}
where $D_{x,z}$ is as defined in Eq.~(6) of the main text and
\begin{align}
F = \mat{cc}{\alpha & \beta \\ \gamma & \epsilon}
\end{align}
is an element of the symplectic group $\SL{p}$, that is, the entries of $F$ are elements of $\mbb{Z}_p$ and $\det F = 1$, and
\begin{align}\label{eq:UF}
U_F=\begin{cases}
\frac{1}{\sqrt{p}}\sum_{j,k=0}^{p-1}\tau^{\beta^{-1}\left(\alpha k^2-2 j k +\epsilon j^2\right)}\ket{j}\bra{k}\quad &\beta\neq0\\
\sum_{k=0}^{p-1} \tau^{\alpha \gamma k^2} \ket{\alpha k}\bra{k}\quad &\beta=0
\end{cases}
\end{align}
where $\tau=\omega^{2^{-1}}$.

In what follows, we will treat the Pauli and the symplectic components separately. For Pauli operators we have $D_{x,z}^\dag \propto D_{-x,-z}$, while for symplectic gates $U_F^\dag = U_{F^{-1}}$. We then have
\begin{align}
U_F D_{x,z} U_F^\dag = D_{F(x,z)} \,.
\end{align}
An important feature of the two-qudit Clifford group that enables the following proof is that the set of two-qudit entangled stabilizer states is exactly the set
\begin{align}
\ket{x,z,F} := (D_{x,z}U_F\otimes \mbb{I})\ket{\Phi} \qquad (\ket{\Phi}=\sum_j\ket{jj}/\sqrt{p})
\end{align}
of states \Jam isomorphic to the single-qudit Clifford group. Moreover, $F$ labels an orthonormal entangled basis, while the Pauli component selects an element of the basis$^{49}$.

The following Lemma, proven elsewhere$^{49}$, provides conditions for two entangled states to be non-orthogonal.

\begin{lemma}\label{lem:orth}
Two entangled stabilizer states $\ket{x_1,z_1,F_1}$ and $\ket{x_2,z_2,F_2}$ are non-orthogonal if and only if $\Tr F =2$ and
\begin{align}
\beta_F\Delta z = (1-\alpha_F)\Delta x &\quad (\beta_F\neq 0) \notag\\
\Delta x = 0 &\quad (\beta_F = 0)
\end{align}
where
\begin{align}
F=F_1^{-1}F_2 = \mat{cc}{\alpha_F & \beta_F \\ \gamma_F & \epsilon_F}
\end{align}
and
\begin{align}
\mat{c}{\Delta x\\ \Delta z} = F_1^{-1}\mat{c}{x_2 - x_1\\ z_2 - z_1}	\,.
\end{align}
\end{lemma}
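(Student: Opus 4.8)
The plan is to reduce non-orthogonality of the two entangled stabilizer states to the non-vanishing of a single one-variable quadratic exponential sum over $\mbb{Z}_p$. First I would apply the maximally-entangled-state identity $\bra{\Phi}(M\otimes\mbb{I})\ket{\Phi}=p^{-1}\Tr M$ to obtain
\begin{align}
\inner{x_1,z_1,F_1}{x_2,z_2,F_2}=\tfrac1p\Tr\bigl[(D_{x_1,z_1}U_{F_1})^{\dag}D_{x_2,z_2}U_{F_2}\bigr],
\end{align}
so that the states are non-orthogonal precisely when the trace on the right is nonzero.

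Next I would simplify the operator inside that trace using only the relations already recorded above: $D_{x_1,z_1}^{\dag}D_{x_2,z_2}\propto D_{x_2-x_1,z_2-z_1}$ for the Pauli part, $U_{F_1}^{\dag}=U_{F_1^{-1}}$ up to a phase, and $U_F D_{x,z}U_F^{\dag}=D_{F(x,z)}$ for moving a displacement past a symplectic gate. Inserting $\mbb{I}=U_{F_1}U_{F_1}^{\dag}$ between the displacement and $U_{F_2}$ then collapses the product to something $\propto D_{\Delta x,\Delta z}\,U_F$ with $F=F_1^{-1}F_2$ and $(\Delta x,\Delta z)^{T}=F_1^{-1}(x_2-x_1,z_2-z_1)^{T}$, exactly as in the statement; all suppressed prefactors are unimodular and so irrelevant to vanishing. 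The task is reduced to: for which $F\in\SL{p}$ and $(\Delta x,\Delta z)\in\mbb{Z}_p^2$ is $\Tr[D_{\Delta x,\Delta z}U_F]\neq 0$?

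The core computation uses the explicit form~\eqref{eq:UF} of $U_F$. For $\beta_F\neq0$, substituting $X^aZ^b\ket{j}=\omega^{bj}\ket{j+a}$ and taking the trace leaves $p^{-1/2}$ times a sum $\sum_{j\in\mbb{Z}_p}\tau^{Q(j)}$, where $Q$ is a quadratic polynomial in $j$ whose leading coefficient equals $\beta_F^{-1}(\alpha_F+\epsilon_F-2)=\beta_F^{-1}(\Tr F-2)$. By the classical evaluation of quadratic Gauss sums, such a sum has modulus $\sqrt p$ whenever its quadratic term is nondegenerate, so it never vanishes; this forces $\Tr F=2$. When $\Tr F=2$ the polynomial $Q$ is affine in $j$, so the sum equals $p$ if its linear coefficient vanishes and $0$ otherwise, and evaluating that coefficient (keeping track of the $\omega^{bj}$ factor and of $\tau=\omega^{2^{-1}}$) yields exactly $\beta_F\Delta z=(1-\alpha_F)\Delta x$. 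In the remaining case $\beta_F=0$, the constraints $\det F=1$ and $\Tr F=2$ force $\alpha_F=\epsilon_F=1$, $U_F$ becomes diagonal, and the trace reduces to $\delta_{\Delta x,0}$ times a one-variable Gauss sum; this yields the condition $\Delta x=0$, the sub-case $F=\mbb{I}$ (both states in a common basis) being handled directly.

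I expect the only real difficulty to be bookkeeping rather than any conceptual obstacle: one must propagate the projective phases of the metaplectic representation so that each ``$\propto$'' is justified, treat the $\beta_F\neq0$ and $\beta_F=0$ branches of~\eqref{eq:UF} uniformly, and invoke the precise non-vanishing statement for quadratic Gauss sums over $\mbb{Z}_p$. The one substantive observation that organizes everything is that the relevant invariant of $F$ is $\Tr F-2$, i.e.\ whether $F$ has $1$ as an eigenvalue (equivalently, fixes a nonzero phase-space vector); once that is isolated the rest is routine algebra in $\mbb{Z}_p$.
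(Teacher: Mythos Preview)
The paper does not prove this lemma; it states that the result is ``proven elsewhere'' and cites reference~[49]. So there is no in-paper argument to compare against, and your Gauss-sum reduction via $\bra{\Phi}(M\otimes\mbb{I})\ket{\Phi}=p^{-1}\Tr M$ is a natural and essentially correct way to supply one.

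However, the implication you draw at the key step is inverted. You correctly observe that when the quadratic coefficient $\beta_F^{-1}(\Tr F-2)$ is nonzero the Gauss sum has modulus $\sqrt{p}$ and hence never vanishes---but a nonvanishing trace means the states are \emph{non}-orthogonal, so what this shows is that $\Tr F\neq 2$ (with $\beta_F\neq 0$) \emph{always} gives non-orthogonal states; in fact the two bases are then mutually unbiased, every overlap having modulus $1/p$. The same occurs in the $\beta_F=0$, $\alpha_F\neq 1$ branch, where the trace collapses to a single unimodular term rather than to zero. A quick check confirms this: for $p=3$, $F_1=\mbb{I}$, $F_2=\mathrm{diag}(2,2)$ one has $\Tr F=1\neq 2$ yet $\langle\Phi|(U_{F_2}\otimes\mbb{I})|\Phi\rangle=1/3\neq 0$. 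What your computation actually establishes is the contrapositive form: the two states are \emph{orthogonal} iff $\Tr F=2$ and the stated linear relation on $(\Delta x,\Delta z)$ \emph{fails}. This is a slight discrepancy with the lemma as literally worded, but it is precisely the statement needed downstream in the proof of Theorem~\ref{thm:upper}, where the lemma is only ever applied to pairs for which $\Tr F=2$ has already been verified by hand.
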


To determine when entangled and product stabilizer states are nonorthogonal, we define the group of computational-basis-preserving gates contained within $\SL{p}$ to be
\begin{align}
BP:=\left\{C_{\ag}=\mat{cc}{\alpha & 0 \\\gamma & \alpha^{-1}}: \alpha\in\Zp^*, \gamma \in \Zp \right\}	\,,
\end{align}
where $\mbb{Z}^*_p=\{1,\ldots,p-1\}$. The left cosets of $BP$, i.e., the sets $\{F C_{\ag}:C_{\ag}\in BP\}$ for $F\in \SL{p}$, will be useful for our analysis. In particular, as we show in the following Lemma, the left coset representatives of $BP$, which we choose to be
\begin{align}\label{eq:representatives}
F_b &= \left(\begin{array}{cc}
  1 & b \\
 0 & 1 \\
\end{array}\right)	\quad (b\neq \infty)	\\
F_{\infty} &= \left(\begin{array}{cc}
  2 & 1 \\
 -1 & 0 \\
\end{array}\right)	\,,
\end{align}
can be used to label the single qudit MUBs. Moreover, while this labeling differs from that of the main text, we show that it preserves the $+1$ eigenstates.

\begin{lemma}\label{lem:prodpar}
The left cosets of $BP$ map the computational basis to the different mutually unbiased bases (MUBs) for a single qudit. Furthermore, 
\begin{align}\label{eq:repar}
\ket{\phi_1^0} &= U_{F_0} \ket{0}	\notag\\
\ket{\phi_{b+1}^0} &= U_{F_{b^{-1}}} \ket{0}	\quad	(b\neq 0,p)\notag\\
\ket{\phi_2^0} & = U_{F_{\infty}}\ket{0}\,,
\end{align}
where $\out{\phi_j^0} = \Pi_j^0$ is the labeling of stabilizer states in the main text.
\end{lemma}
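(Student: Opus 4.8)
The plan is to prove the two claims of Lemma~\ref{lem:prodpar} separately. The first claim is group‑theoretic. I would begin by identifying $BP$ as exactly the stabiliser in $\SL{p}$ of the computational basis: a symplectic $F$ sends the $Z$‑eigenbasis to itself (equivalently, conjugates $Z=D_{0,1}$ to a power of $Z$, equivalently fixes the line spanned by $(0,1)$) if and only if its $(1,2)$ entry vanishes, and then $\det F=1$ forces $F$ into the displayed form of $BP$. Counting, $|\SL{p}|=p(p^2-1)$ and $|BP|=p(p-1)$, so there are exactly $p+1$ left cosets, matching the $p+1$ lines through the origin of $\mbb{Z}_p^2$; since $\SL{p}$ acts transitively on those lines, $F\cdot BP\mapsto F(0,1)$, taken up to scalar, is a bijection from cosets to lines. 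From~\eqref{eq:UF} one reads off that for $C=C_{\alpha,\gamma}\in BP$ the unitary $U_C$ merely permutes the computational basis vectors by $k\mapsto\alpha k$ up to phases, so in particular $U_C\ket{0}=\ket{0}$; hence $U_{FC}$ and $U_F$ produce the same basis, and the basis $U_F(\text{computational basis})$ depends only on the coset $F\cdot BP$. By the intertwining relation $U_FD_{x,z}U_F^\dag=D_{F(x,z)}$ recorded above Lemma~\ref{lem:orth}, this basis is the eigenbasis of $D_{F(0,1)}$; since the direction $F(0,1)$ exhausts all $p+1$ directions as $F$ ranges over the cosets, and since $D_v$ and $D_{v'}$ share an eigenbasis whenever $v'\propto v$, these eigenbases are precisely the eigenbases of $\{D_{0,1},D_{1,0},\ldots,D_{1,p-1}\}$, i.e.\ the complete set of MUBs. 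Injectivity of the coset‑to‑MUB map is the converse of the stabiliser computation: if $U_G$ preserves the computational basis then $D_{G(0,1)}$ is diagonal, forcing $G\in BP$.

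For the second claim I would compute the three families of representatives acting on $(0,1)$: $F_0(0,1)=(0,1)$, $F_\infty(0,1)=(1,0)$, and $F_b(0,1)=(b,1)$ for $b\in\mbb{Z}_p$. Since $Z\ket{0}=\omega^0\ket{0}$ and the intertwiner is exact for odd $p$, the vector $U_F\ket{0}$ is the $\omega^0$‑eigenvector of $D_{F(0,1)}$---which already proves the assertion that the labelling ``preserves the $+1$ eigenstates.'' For $F_0$ one has $U_{F_0}=\mbb{I}$, so $U_{F_0}\ket{0}=\ket{0}=\ket{\phi_1^0}$; for $F_\infty$ one computes $U_{F_\infty}\ket{0}=p^{-1/2}\sum_j\ket{j}$, the $\omega^0$‑eigenvector of $D_{1,0}$, i.e.\ $\ket{\phi_2^0}$. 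For the remaining representatives I would use $D_{c,1}^{c^{-1}}=D_{c^{-1}(c,1)}=D_{1,c^{-1}}$ to conclude that $U_{F_c}\ket{0}$, being the $\omega^0$‑eigenvector of $D_{c,1}$, is also the $\omega^0$‑eigenvector of $D_{1,c^{-1}}$; locating $D_{1,c^{-1}}$ in the list below Eq.~\eqref{eqn:Pstab} and substituting $c=b^{-1}$ then yields the remaining line of~\eqref{eq:repar}.

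I expect the whole argument to be routine modulo two bookkeeping points. First, $F\mapsto U_F$ is only a projective representation, so one must check that the metaplectic phase never interferes; it does not, because the argument only ever compares \emph{unordered} bases (overall phases on individual basis vectors are irrelevant there) and the single identity $U_C\ket{0}=\ket{0}$, whose phase is computed directly from~\eqref{eq:UF}. Second, one must match the coset‑representative labelling $b$ against the main‑text ordering of $\{D_{0,1},D_{1,0},\ldots,D_{1,p-1}\}$; this is exactly the reindexing $b\mapsto b^{-1}$ (with the $Z$‑ and $X$‑eigenbases handled separately, as in the first and third lines of~\eqref{eq:repar}), and verifying it is the only genuinely error‑prone step. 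The checkpoint I would be most careful about is the \emph{exactness} of $U_FD_vU_F^\dag=D_{Fv}$ for the $F$'s in play, so that the eigenvalue above is precisely $\omega^0$ and not some other power of $\omega$; a fully self‑contained version of this check follows from applying $D_{F(0,1)}$ directly to $U_{F_b}\ket{0}=p^{-1/2}\sum_j\tau^{b^{-1}j^2}\ket{j}$ obtained from~\eqref{eq:UF}.
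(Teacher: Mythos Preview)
Your proposal is correct and follows essentially the same route as the paper. For the first claim, the paper also observes (via $|\bra{k}U_F^\dag U_{F'}\ket{l}|=\delta(k-\alpha l)$ when $F'=FC_{\ag}$) that elements of a coset of $BP$ produce the same basis, and then invokes the count $|\SL{p}|/|BP|=p+1$; your phrasing via the stabiliser of the line spanned by $(0,1)$ and the transitive action on lines is a slightly more structural packaging of the same computation, and it has the minor advantage of making injectivity explicit rather than implicit in the counting. For the second claim, the paper proceeds exactly as you do: it verifies that $U_{F_{b^{-1}}}\ket{0}$ is the $+1$ eigenvector of the relevant displacement operator by conjugating through the intertwiner $U_F D_{x,z} U_F^\dag = D_{F(x,z)}$, treating $F_0$ and $F_\infty$ separately; your detour through $D_{c,1}^{\,c^{-1}}=D_{1,c^{-1}}$ is equivalent. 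Your cautionary remarks about the metaplectic phase and the index matching are well placed but, as you anticipate, routine.
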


\begin{proof}
To prove the first statement, note that each element of a left coset maps the computational basis to the same basis, since for any $F,F'$ in a left coset of $BP$ we can write $F' = FC_{\ag}$ for some $\ag$ (by definition of a left coset). Therefore
\begin{align}
\lvert\bra{k}U_F^{^\dagger} U_{F'}\ket{l}\rvert = \lvert\bra{k}U_{C_{\ag}}\ket{l} \rvert = \delta(k-\alpha l)	\,.
\end{align}
Since there are $p+1$ MUBs for a qudit of dimension $p$ and $p+1$ left cosets of $BP$ (since $\vert \SL{p}\vert /\vert BP\vert = p+1$), they must be in one-to-one correspondence.

The first line of Eq.~\eqref{eq:repar} is trivial since $F_0 = \mbb{I}$. To establish the second line, note that $\ket{\phi_b^0}$ is the unique state such that $D_{1,b-1}\ket{\phi_b^0}=\ket{\phi_b^0}$ for $b=1,\ldots,p+1$. Substituting $U_{F_{b^{-1}}}\ket{0}$ for $\ket{\phi_b^0}$, we have
\begin{align}
D_{1,b}U_{F_{b^{-1}}}\ket{0} &= U_{F_{b^{-1}}}U_{F_{b^{-1}}}^{\dagger}D_{1,b}U_{F_{b^{-1}}}\ket{0} \nonumber\\
&= U_{F_{b^{-1}}}D_{F_{b^{-1}}^{-1}1,b}\ket{0} \nonumber\\
&= U_{F_{b^{-1}}}D_{0,b}\ket{0} \nonumber\\
&= U_{F_{b^{-1}}}\ket{0}	\,,
\end{align}
and so $U_{F_{b^{-1}}}\ket{0} = \ket{\phi_b^0}$. The same argument shows that $\ket{\phi_2^0} = F_{\infty}\ket{0}$.
\end{proof}

\begin{lemma}\label{lem:entsep}
The set of entangled states that are not mutually unbiased with respect to a separable basis $(U_{F_b}\ket{k})\ket{l}$ is the set of entangled bases that are \Jam isomorphic to the left coset of $BP$ containing $F_b$.

Moreover, for entangled bases that are non-mutually unbiased to a given separable basis, 
\begin{align}\label{eq:ent_sep}
|\bra{k}U_{F_b}^\dag\bra{l} x,z,F_b C_{\ag}\rangle|^2 =\begin{cases} \tfrac{1}{p} \mbox{ if } x - bz =k-\alpha l \mbox{ and } b\neq\infty\\
\tfrac{1}{p} \mbox{ if } z =k-\alpha l \mbox{ and } b=\infty\\
0 \text{ otherwise,} \end{cases}
\end{align}
for all $b\in\mbb{Z}_{p,\infty}:=\mbb{Z}_p\cup\{\infty\}$, $k,l\in\mbb{Z}_p$ and $C_{\ag}\in BP$.
\end{lemma}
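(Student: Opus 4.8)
The plan is to collapse the two-qudit overlap onto a single-qudit matrix element and then read everything off from the explicit form of the metaplectic gates in Eq.~\eqref{eq:UF}. First I would use the defining property of the maximally entangled state: for any operator $M$ on the first factor, $(\bra{k}\otimes\bra{l})(M\otimes\mbb{I})\ket{\Phi}=\tfrac{1}{\sqrt{p}}\bra{k}M\ket{l}$. Applying this with $M=D_{x,z}U_{F_bC_{\ag}}$ gives
\begin{align}
\big(\bra{k}U_{F_b}^\dag\otimes\bra{l}\big)\ket{x,z,F_bC_{\ag}} = \tfrac{1}{\sqrt{p}}\,\bra{k}\,U_{F_b}^\dag D_{x,z}U_{F_bC_{\ag}}\,\ket{l}\,.
\end{align}
Then, using $U_{F_b}^\dag = U_{F_b^{-1}}$, the intertwining relation $U_GD_{x,z}U_G^\dag=D_{G(x,z)}$, and the resulting composition law $U_{F_1}U_{F_2}\propto U_{F_1F_2}$, the right-hand side becomes $\bra{k}D_{F_b^{-1}(x,z)}U_{C_{\ag}}\ket{l}$ up to a scalar phase, which is immaterial because the lemma concerns only $|\cdot|^2$. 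It is convenient to prove the first (set-level) statement at the same time by letting $F$ range over all of $\SL{p}$, writing $G:=F_b^{-1}F$, so that the squared overlap is $\tfrac1p\,|\bra{k}D_{F_b^{-1}(x,z)}U_G\ket{l}|^2$.

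The next step is a dichotomy on $\beta_G$. Since $D_{a,c}\ket{m}$ is a computational basis vector up to phase, contracting against $\bra{k}$ only shifts and rephases, so $|\bra{k}D_{F_b^{-1}(x,z)}U_G\ket{l}|=|\bra{k-a}U_G\ket{l}|$, where $a$ is the first coordinate of $F_b^{-1}(x,z)$. If $\beta_G\neq0$, Eq.~\eqref{eq:UF} forces $|\bra{j}U_G\ket{l}|=1/\sqrt{p}$ for every $j$, so every overlap has modulus $1/p^2$ and the entangled basis labelled by $F$ is mutually unbiased to the separable basis $(U_{F_b}\ket{k})\ket{l}$. If $\beta_G=0$, then $G=C_{\agp}$ for some $\agp$, equivalently $F\in F_bBP$, and $U_G\ket{l}=\tau^{\alpha'\gamma' l^2}\ket{\alpha' l}$ is itself a basis vector, so the overlap is $0$ or of modulus $1/\sqrt{p}$, i.e.\ \emph{not} mutually unbiased. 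Since $F_b^{-1}(F_bC_{\ag})=C_{\ag}\in BP$, the left coset $F_bBP$ is precisely the set of $F$ that are \Jam isomorphic to it, which is the first sentence of the lemma.

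It then remains to evaluate the nonzero case. Here I would compute the shift $a$ explicitly: for $b\neq\infty$, $F_b^{-1}=\mat{cc}{1 & -b\\ 0 & 1}$ gives $a=x-bz$, so the squared overlap is $\tfrac1p$ exactly when $x-bz=k-\alpha l$ and $0$ otherwise; for $b=\infty$ one reads $a$ off the first row of $F_\infty^{-1}$, obtaining the condition $z=k-\alpha l$. Together with the overall $1/p$ from the entangled-state normalisation, this is exactly Eq.~\eqref{eq:ent_sep}.

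The main obstacle is bookkeeping rather than anything conceptual: because the statement is entirely about $|\cdot|^2$, the projective phases in $U_{F_1}U_{F_2}\propto U_{F_1F_2}$ and in $\ket{x,z,F_bC_{\ag}}\propto(D_{x,z}U_{F_b}U_{C_{\ag}}\otimes\mbb{I})\ket{\Phi}$ all drop out, so the only genuinely delicate point is the $b=\infty$ entry: one must fix the representative $F_\infty=\mat{cc}{2 & 1\\ -1 & 0}$ together with the (left/right) convention for the symplectic action on $(x,z)$ consistently with Lemma~\ref{lem:prodpar}, so that the surviving nonorthogonality condition comes out as $z=k-\alpha l$ with the stated sign.
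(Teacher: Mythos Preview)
Your proposal is correct and follows essentially the same route as the paper. The paper's own proof collapses the two-qudit overlap to $p^{-1}\lvert\bra{k}D_{F_b^{-1}(x,z)}U_{F_b^{-1}F}\ket{l}\rvert^2$ via the \Jam isomorphism and the intertwining relation, observes from Eq.~\eqref{eq:UF} that this equals $p^{-2}$ unless $F_b^{-1}F\in BP$, and then dismisses the second statement as ``straightforward calculation''; you carry out precisely that calculation, including the explicit shift $a$ and the $b=\infty$ case, and correctly flag the one genuinely delicate point (the sign convention for $F_\infty^{-1}(x,z)$).
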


\begin{proof}
To prove the first statement, consider the inner product
\begin{align}
\vert(\bra{k}U_{F_b}^{\dagger})\bra{l}\ket{x,z,F}\vert^2 &= p^{-1} \vert\bra{k}U_{F_b}^\dag D_{x,z}U_F\ket{l}\vert^2 \nonumber\\
&= p^{-1} \vert \bra{k}D_{F_b^{-1} x,z}U_{F_b^{-1}F}\ket{l}\vert^2
\end{align}
for some fixed $F$. By Eq.~\eqref{eq:UF} the above inner product will be $p^{-2}$ unless $F_b^{-1} F = C_{\ag}$ for some $\ag$.

The second statement can be proven by straightforward calculation.
\end{proof}

Lemmas~\ref{lem:orth} and \ref{lem:entsep} establish the algebraic constraints within our parametrization that nonorthogonal projectors must satisfy, while Lemma~\ref{lem:prodpar} enables us to translate results from the parametrization used here to that of the main text. We now show how the algebraic constraints of Lemmas~\ref{lem:orth} and \ref{lem:entsep} can be used to show that for any $A^{\vec{r}}\in \mc{A}_{\rm sim}$, any set of $p^3 +1$ mutually nonorthogonal stabilizer projectors necessarily contains elements that are not in $\{\Pi\}^{\vec{r}}$. Consequently, $\alpha(\Gamma^{\vec{r}}) < p^3 + 1$ for any $A^{\vec{r}}\in \mc{A}_{\rm sim}$.

\begin{theorem}\label{thm:upper}
For any $A^{\vec{r}}\in \mc{A}_{\rm sim}$, the independence number of the exclusivity graph $\Gamma^{\vec{r}}$ satisfies $\alpha(\Gamma^{\vec{r}}) < p^3 + 1$.
\end{theorem}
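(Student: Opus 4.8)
I would argue by contradiction: suppose an independent set $S$ with $|S|=p^3+1$ exists. The $p+1$ separable bases with their facet-forbidden projectors removed (each a clique of $p^2-p$ vertices) together with the $p^3-p$ entangled bases (each a clique of $p^2$ vertices) partition the $p^5-p$ vertices of $\Gamma^{\vec{r}}$ into $p^3+1$ pairwise disjoint cliques, so $S$ must contain \emph{exactly one} vertex from each. Write $P_b$ ($b\in\mbb{Z}_{p,\infty}$) for its element of separable basis $b$ and $Q_F$ ($F\in\SL{p}$) for its element of entangled basis $F$. Using Lemma~\ref{lem:prodpar} I would parametrise $P_b=(U_{F_b}\out{k_b}U_{F_b}^{\dagger})\otimes\out{l_b}$ and encode the condition ``$P_b$ is not the projector deleted by the facet $A^{\vec{r}}$'' as a single forbidden value $k_b\neq\kappa_b$, with $\kappa_b\in\Zp$ read off from $\vec{r}$ via Lemma~\ref{lem:prodpar}. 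The aim is to show these $p+1$ restrictions are incompatible with mutual nonorthogonality of $S$.

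The constraints are best organised through the coset structure of $BP$ in $\SL{p}$. Each entangled basis $F$ lies in a unique left coset $F_b\,BP$, and by Lemma~\ref{lem:entsep} it is non-mutually-unbiased to precisely separable basis $b$ and mutually unbiased to all the others; hence the only separable constraint on $Q_F$ is $Q_F\not\perp P_b$, which by Eq.~\eqref{eq:ent_sep} is the single affine equation $x_F-bz_F=k_b-\alpha_F l_b$ (replaced by $z_F=k_b-\alpha_F l_b$ when $b=\infty$), $\alpha_F$ being the $(1,1)$-entry of $F_b^{-1}F\in BP$. The entangled--entangled constraints come from Lemma~\ref{lem:orth}: $Q_F\not\perp Q_{F'}$ is automatic unless $\Tr(F^{-1}F')=2$, in which case it gives one further affine equation on $(x_F-x_{F'},z_F-z_{F'})$. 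A short computation with the $BP$ multiplication law shows that for $F,F'$ in the \emph{same} coset, $F^{-1}F'\in BP$ has trace $2$ iff $\alpha_F=\alpha_{F'}$, and that for such a pair the Lemma~\ref{lem:orth} equation is already implied by the two separable equations; so all intra-coset constraints are automatically consistent, and the content is carried by cross-coset pairs, for which the explicit representatives $F_b$ make the relevant traces and affine equations computable.

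The heart of the argument --- and the step I expect to be the main obstacle --- is to show that this system of affine equations over $\Zp$ (one per entangled basis from Lemma~\ref{lem:entsep}, plus the cross-coset equations from Lemma~\ref{lem:orth}) has no simultaneous solution once all $p+1$ conditions $k_b\neq\kappa_b$ are imposed. My plan would be: (i) observe that within coset $b$, for each value $\alpha$ the $p$ picks $Q_F$ with $\alpha_F=\alpha$ lie on a common line $x-bz=k_b-\alpha l_b$, so two distinct values of $\alpha$ already determine $k_b$ and $l_b$ from the entangled picks alone; (ii) substitute these into the cross-coset equations to obtain a closed linear system in the entangled parameters; and (iii) show that this system over-constrains the $k_b$ --- for instance by forcing the entangled picks into essentially the configuration used in the proof of Theorem~\ref{thm:lower}, all supported through one phase point, for which at least one separable basis contributes the deleted projector --- so that $k_{b_0}=\kappa_{b_0}$ for some $b_0$, contradicting $k_{b_0}\neq\kappa_{b_0}$. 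The delicate point is the rigidity in step (iii): proving the cross-coset equations leave no residual freedom, which I anticipate needs a careful case split on $b=\infty$ versus $b\neq\infty$ in Lemma~\ref{lem:entsep}, on $\beta_F=0$ versus $\beta_F\neq0$ in Lemma~\ref{lem:orth}, and on which cross-coset pairs of bases are non-mutually-unbiased. That the facet-deleted projectors are indispensable here --- with all of them restored $\Sigma^{\vec{r}}$ equals $(p^3+1)\mbb{I}_{p^2}$ and $\alpha(\Gamma^{\vec{r}})=p^3+1$ --- confirms that any correct proof must route the contradiction through the conditions $k_b\neq\kappa_b$ as above.
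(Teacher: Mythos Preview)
Your outline is essentially the paper's approach: argue by contradiction, use the $p^3+1$ clique cover to force one pick per basis, parametrise via the $BP$-coset structure, feed in Lemma~\ref{lem:entsep} for the separable--entangled constraints and Lemma~\ref{lem:orth} for the entangled--entangled ones, and drive toward $k_b$ hitting the forbidden value for some $b$. Your observations that intra-coset trace-$2$ pairs are exactly the same-$\alpha$ pairs, and that those constraints are already implied by the separable equations, are correct and match the paper's implicit use.

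Where the paper is more concrete than your step~(iii): rather than assembling a general linear system, it first uses Lemma~\ref{lem:orth} on the pairs $(C_{\ag},\,F_bC_{\ag})$ to pin down $z^b_{\ag}$ (and the $b=\infty$ analogue), then hand-picks the cross-coset pairs $\bigl(F_bC_{1,\gamma},\,F_{b+c}C_{2,\gamma'}\bigr)$ with $\gamma'=2^{-1}\gamma-(2c)^{-1}$ so that the trace-$2$ condition of Lemma~\ref{lem:orth} holds, yielding the recursion $k_{b+c}-k_b=l-c\bigl(2z^0_{2,\gamma'}-z^0_{1,\gamma}\bigr)$. If the right side is ever nonzero the $k_b$ exhaust $\Zp$ and some $k_{b'}=0$; if it vanishes identically for all $c,\gamma$, then the $k_b$ are constant on $\Zp$ and one further $b=\infty$ comparison forces $k_\infty=0$. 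So your anticipated case split is real, but the rigidity comes from these specific $\alpha\in\{1,2\}$ comparisons rather than a wholesale linear-algebra count.
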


\begin{proof}
Let $A^{\vec{r}}\in \mc{A}_{\rm sim}$ and assume, for the purpose of obtaining a contradiction, that there exists a set $I$ of $p^3+1$ mutually nonorthogonal elements of $\{\Pi\}^{\vec{r}}$. Since $\{\Pi\}^{\vec{r}}$ contains elements from $p^3 + 1$ orthonormal bases, such a set must contain one element from each basis.

Therefore we can parametrize $I = \sep \cup \ent$ by
\begin{align}\label{eq:parameterization}
\sep &= \{(U_{F_b}\otimes 1)\ket{k_b l_b}:b\in\mbb{Z}_{p,\infty}\} \nonumber\\
\ent &= \{\ket{x^b_{\ag},z^b_{\ag},F_b C_{\ag}}:b\in\mbb{Z}_{p,\infty}\}
\end{align}
for some choice of $\{k_b,l_b,x^b_{\ag},z^b_{\ag}:b\in\mbb{Z}_{p,\infty},\alpha \in\mbb{Z}_p^*,\gamma\in\mbb{Z}_p\}$. In order for $\sep$ to be a set of mutually nonorthogonal states, $l_b =: l$ for all $b$. Note that we abuse notation slightly by referring to $I$ as a set of pure states rather than the corresponding projectors.

Without loss of generality we set $\vec{r} = 0$ since any $A^{\vec{r}}\in \mathcal{A}_{\textsc{SIM}}$ can be written as $D_{x,z}A^{0} D_{x,z}^{\dagger}$ for some $x$ and $z$ $^{40}$ and applying a unitary to $\{\Pi\}^{\vec{r}}$ does not change the exclusivity graph (since orthogonality is preserved by unitaries).

Therefore to prove that $I$ can be contained in $\{\Pi\}^{\vec{r}}$ for any $A^{\vec{r}}\in \mc{A}_{\rm sim}$, it is sufficient to prove that $k_b = 0$ for some $b$, since the corresponding projector is not in $\{\Pi\}^{0}$ by Lemma~\ref{lem:prodpar}. We will prove this by iteratively using the constraints from Lemmas~\ref{lem:orth} and \ref{lem:entsep}.

By Lemma~\ref{lem:orth}, requiring
\begin{align}
\inner{x^0_{\ag},z^0_{\ag},C_{\ag}}{x^b_{\ag},z^b_{\ag},F_b C_{\ag}} \neq 0
\end{align}
is equivalent to requiring
\begin{align}\label{eq:zag}
z^b_{\ag} &= z^0_{\ag}\ \forall \ag \quad (b\neq \infty) \nonumber\\
x^{\infty}_{\ag} &= x^0_{\ag} + z^0_{\ag} - z^{\infty}_{\ag} \ \forall \ag \quad (b= \infty)\,,
\end{align}
where we have used
\begin{align}
C_{\ag}^{-1} F_b C_{\ag} = \begin{cases}
\mat{cc}{1 + \alpha^{-1}\gamma b & \alpha^{-2} b \\ -\gamma^2 b & 1 - \alpha^{-1}\gamma b} & (b\neq \infty)\\
\mat{cc}{2 + \alpha^{-1}\gamma & \alpha^{-2} \\ -(\alpha+ \gamma)^2 & - \alpha^{-1}\gamma} & (b= \infty).\\
\end{cases}
\end{align}

We now consider the requirement that the sets $\sep$ and $\ent$ are pairwise mutually nonorthogonal. By Eq.~\eqref{eq:ent_sep},
\begin{align}
\bra{k_b l_b}(U_{F_b}^\dag\otimes 1)\ket{x^b_{\ag},z^0_{\ag}, F_b C_{\ag}} \neq 0
\end{align}
is equivalent to
\begin{align}\label{eq:xag}
x^b_{\ag} &= b z^0_{\ag} + k_b - l \alpha \ \forall \ag \quad (b\neq \infty) \nonumber\\
z^{\infty}_{\ag} &= l\alpha - k_{\infty} =: z^{\infty}_{\alpha} \ \forall \ag \quad (b= \infty)\,,
\end{align}
which completely characterizes the restrictions on $\ent$ such that it contains no elements orthogonal to any element of a fixed $\sep$.

This then completely specifies every parameter except $z^0_{\ag}$ and $\{k_b:b\in\mbb{Z}_{p,\infty}\}$. To specify the remaining parameters, we will have to impose further constraints on the elements of $\ent$ to ensure $\ent$ contains no pairs of mutually orthogonal elements.

To do this, note that for all $b\in\mbb{Z}_p$ and $c\in\mbb{Z}_p^*$, $F_b F_c = F_{b+c}$ and $\Tr F_{c}C_{2,-(2c)^{-1}}=2$. Therefore, by Lemma~\ref{lem:orth},
\begin{align}\label{eq:crossterms}
\inner{x^b_{1,\gamma},z^0_{1,\gamma},F_b C_{1,\gamma}}{x^{b+c}_{2,\gamma'},z^0_{2,\gamma'},F_{b+c} C_{2,\gamma'}}\neq 0 \,,
\end{align}
where $\gamma' = 2^{-1}\gamma - (2c)^{-1}$, is equivalent to
\begin{align}\label{eq:kbc}
k_{b+c} - k_b = l - c(2z^0_{2,\gamma'}-z^0_{1,\gamma})
\end{align}
for all $b,\gamma\in\mbb{Z}_p$ and $c\in\mbb{Z}_p^*$, where we have used Eq.~\eqref{eq:zag} and \eqref{eq:xag}. If the right-hand-side of Eq.~\eqref{eq:kbc} is nonzero for any value of $c$ or $\gamma$, then $\{k_b:b\in\mbb{Z}_p\}\equiv \mbb{Z}_p$ and so $k_{b'}=0$ for some value of $b'$, which would complete the proof.

Therefore the only way $I\subset\{\Pi\}^{0}$ is if the right-hand-side of Eq.~\eqref{eq:kbc} is zero for all $c\in\mbb{Z}_p^*$ and $\gamma$, which implies $k_b = k \neq 0$ for all $b\in\mbb{Z}_p$ and
\begin{align}
l = c(2z^0_{2,\gamma'}-z^0_{1,\gamma})	\ \forall c\neq 0,\gamma \,.
\end{align}
This can be solved to obtain
\begin{align}
z^0_{1,\gamma} &= z^0_{1,0} - \gamma l	\nonumber\\
z^0_{2,\gamma} &= z^0_{2,0} - \gamma l	
\end{align}

We now consider the case where $b=\infty$ in Eq.~\eqref{eq:crossterms}, which implies that
\begin{align}
\inner{x^{\infty}_{1,0},z^{\infty}_{1},F_{\infty}}{x^c_{2,\gamma'},z^0_{2,\gamma'},F_c C_{2,\gamma'}}\neq 0 \,,
\end{align}
for $c\neq -2$ where $\gamma'=2^{-1}c - 1$ is equivalent to
\begin{align}
(c-1)k_{\infty} = 2z^0_{2,0}-z^{0}_{1,0}	\,,
\end{align}
which can only hold independently of $c$ if $k_{\infty} = 0$.
\end{proof}

%


%
%
%
%

\end{document}